\newtheorem{main thm}{\quad Theorem}
\newtheorem{df}{\quad Definition}[section]
\newtheorem{thm}[df]{\quad Theorem}
\newtheorem{lem}[df]{\quad Lemma}
\newtheorem{prop}[df]{\quad Proposition}
\newtheorem{rem}[df]{\quad Remark}
\newtheorem{eg}[df]{\quad Example}
\newcommand{\Z}{\mathbb{Z}}
\newcommand{\Q}{\mathbb{Q}}
\newcommand{\F}{\mathbb{F}}
\newcommand{\alp}{\alpha}
\newcommand{\isom}{\simeq}
\newcommand{\x}{\times}
\newcommand{\up}{\stackrel}
\DeclareMathOperator*{\zetaprod}{{\displaystyle{\prod\kern-1.45em\coprod}}}
\newtheorem{nonumberthm}{\quad Theorem 1.2rev}
\begin{document}
\begin{center}
{\Large {Note on families of pairing-friendly elliptic curves with small embedding degree}
}\\
\medskip{Keiji OKANO}
\footnote[0]{2000 \textit{Mathematics Subject Classification}. 
11T71, 14H52.} 
\footnote[0]{\textit{Key Words}. Elliptic curves, Pairing-based cryptography, Embedding degree.}
\end{center}


\begin{abstract}
Pairing-based cryptographic schemes require so-called {\it pairing-friendly elliptic curves}, which have special properties.
The set of pairing-friendly elliptic curves that are generated by given polynomials 
form a {\it complete family}.
Although a complete family with a $\rho$-value of $1$ is the ideal case, there is only one such example that is known; this was given by Barreto and Naehrig (Lecture Notes in Computer Science, 3897, Springer, Berlin, 2006, pp. 319--331).
We prove that there are no ideal families with embedding degree $3$, $4$, or $6$ and that many complete families with embedding degree $8$ or $12$ are nonideal, even if we chose noncyclotomic families.
\end{abstract}


\section{Introduction}\label{intro}

Pairing-based cryptographic schemes, which have been suggested independently by Boneh-Franklin \cite{BF} and Sakai-Ohgishi-Kasahara \cite{SOK00}, are based on pairings on elliptic curves.
They fit many novel protocols for which no other practical implementation is known;
see \cite{BF}, \cite{Joux}, and \cite{SOK00} for pioneering works in this field, and see \cite{FST} and \cite{Paterson} for surveys.
One of the features of these schemes is that they require so-called 
{\it pairing-friendly elliptic curves}, 
which have special properties,
whereas the elliptic ElGamal cryptosystems can be implemented by using almost randomly generated elliptic curves.
More precisely, a pairing-friendly elliptic curve $E$ over a finite field $\F_q$ has the following properties (see \cite{FST}):
(i) The curve has a subgroup $G$ of large prime order $r$ such that 
$r \mid q^k-1$ for some integer $k$, and 
$r \nmid q^i-1$ for $0<i<k$.
(ii) The parameters $q$, $r$, and $k$ should be chosen such that $r \ge \sqrt{q}$ and $k<(\log_2 r)/{8}$, 
which mean that the discrete logarithm problem is not only infeasible both in $G$ and $\F_q^\x$ but also the pairings can be computed.
Here, $k$ is called the {\it embedding degree} of $E$ with respect to $r$.
In other words, the embedding degree is the degree of the extension field over $\F_q$ to which the pairing maps.
Also, we define the $\rho$-value of $E$ as
$
\rho(E)=({\log q})/({\log r}).
$
In general, curves with small $\rho$-values are desirable in order to speed up the arithmetic on elliptic curves.
If $\rho(E)=1$, then the curve is ideal.
Although there are some methods for constructing pairing-friendly elliptic curves,
it is known to be very rare that $\rho$ takes the value $1$.
Recently, it has been found that such ideal curves are also required in the theory of zk-SNARK \cite{BCTV} (also see \cite{CFHKKNPZ}), 
which is an application for which these had not been assumed.

In practice, we need to construct curves of a specified bit size for each of various embedding degrees.
To this end, in order to describe the families of pairing-friendly elliptic curves, we give four polynomials $t(x)$, $r(x)$, $q(x)$, and $y(x)$, 
instead of the trace $t$ of the Frobenius map on $E$, the above two parameters $r$ and $q$.
The families of pairing-friendly elliptic curves 
that are generated by such polynomials 
are called {\it complete families}, and we consider them
in this paper.
Moreover, when $r(x)$ is chosen to be a cyclotomic polynomial, this yields the most popular complete family, known as the cyclotomic case
(see Section \ref{ps-friendly} for more details).

We also define the {\it $\rho$-value} $\rho(t, r, q)$ for complete families (Definition \ref{family of p-f cv}).
The case where the $\rho$-value equals $1$ is also ideal, while this is not true for most cases.
We thus consider the question, under what 
conditions is the $\rho$-value equal to or close to $1$?
There is only one known example of a complete family with $\rho(t, r, q)=1$; it was constructed by Barreto and Naehrig \cite{BN} (Remark \ref{BN}). 
Various approaches have been used to search for the ideal case; see, for example, 
\cite{FST}, \cite{LL}, \cite{KSS}, \cite{Okano2012}, \cite{Min Sha}, and \cite{Yoon}.

In this paper, 
we show that there is {\it no} ideal case for which $k=3$, $4$, or $6$.
In addition, we consider the cases of $k=8$ and $12$, including noncyclotomic cases.

\begin{thm}\label{main thm 1}
\makeatletter
  \parsep   = 0pt
  \labelsep = 5pt
  \def\@listi{%
     \leftmargin = 20pt \rightmargin = 0pt
     \labelwidth\leftmargin \advance\labelwidth-\labelsep
     \topsep     = 0\baselineskip
     \partopsep  = 0pt \itemsep       = 0pt
     \itemindent = 0pt \listparindent = 10pt}
  \let\@listI\@listi
  \@listi
  \def\@listii{%
     \leftmargin = 20pt \rightmargin = 0pt
     \labelwidth\leftmargin \advance\labelwidth-\labelsep
     \topsep     = 0pt \partopsep     = 0pt \itemsep   = 0pt
     \itemindent = 0pt \listparindent = 10pt}
  \let\@listiii\@listii
  \let\@listiv\@listii
  \let\@listv\@listii
  \let\@listvi\@listii
  \makeatother
Let $k=3$, $4$, or $6$, and let $D$ be a square-free positive integer.
Suppose that $(t(x), r(x), q(x))$ parameterizes a complete family of elliptic curves with complex multiplication (CM) discriminant $D$ and embedding degree $k$.
Then, $\rho(t, r, q) \neq 1$.
\end{thm}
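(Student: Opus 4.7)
Set $n := \deg t(x)$. Since $r(x) \mid \Phi_k(t(x)-1)$ and $r$ is nonconstant, we must have $n \ge 1$. Combine three degree constraints: (i) $\rho(t,r,q) = 1$ gives $\deg q = \deg r$; (ii) the CM relation $Dy(x)^2 = 4q(x) - t(x)^2$ yields $2n \le \deg q$; (iii) from $r \mid q+1-t$ and $r \mid \Phi_k(q)$ with $r$ irreducible, one deduces $r(x) \mid \Phi_k(t(x)-1)$, so $\deg r \le \varphi(k)\,n$. For $k \in \{3, 4, 6\}$ we have $\varphi(k) = 2$, so the chain collapses to the equalities $\deg q = \deg r = 2n$, whence $r(x) = c\,\Phi_k(t(x)-1)$ for some nonzero constant $c$, and the congruence $q \equiv t - 1 \pmod r$ combined with $\deg q = \deg r$ forces
\[
q(x) = h\,r(x) + t(x) - 1
\]
for some nonzero constant $h$.

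Substituting into the CM relation gives
\[
Dy(x)^2 = P(t(x)), \qquad P(T) := 4hc\,\Phi_k(T - 1) - (T - 2)^2,
\]
a quadratic polynomial in $T$. The decisive step is to show that, since $t(x)$ is nonconstant, $P(T)$ must itself be a perfect square in $\overline{\Q}[T]$. Suppose to the contrary that $P(T) = a(T - \alpha)(T - \beta)$ with $\alpha \ne \beta$. If $\alpha, \beta \in \Q$, the factors $t(x) - \alpha$ and $t(x) - \beta$ are coprime in $\Q[x]$, so each must be a rational constant times a polynomial square, say $t - \alpha = A u^2$ and $t - \beta = B v^2$ with $\gcd(u, v) = 1$; differentiating the identity $A u^2 - B v^2 = \beta - \alpha$ yields $A u u' = B v v'$, hence $u \mid v'$, which is impossible unless $\deg u = \deg v = 0$, i.e.\ $t$ is constant. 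If $\alpha, \beta \notin \Q$, then $P(T) = a\bigl((T - \mu)^2 - \nu^2 E\bigr)$ for some squarefree nonsquare $E \in \Q$, producing the polynomial Pell identity $a(t - \mu)^2 - D y^2 = a \nu^2 E$; a leading-term argument (distinguishing according to whether $aD$ is a rational square) again forces $t - \mu$ to be constant. In either subcase $t$ would be constant, contradicting $n \ge 1$.

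Hence $\mathrm{disc}_T P(T) = 0$. Direct computation of the discriminants shows this forces $hc = 1$ when $k = 3$, $hc = 1/2$ when $k = 4$, and $hc = 1/3$ when $k = 6$; back-substitution into $q = hr + t - 1$ then yields
\[
q(x) = t(x)^2, \quad q(x) = \tfrac{1}{2}\,t(x)^2, \quad q(x) = \tfrac{1}{3}\,t(x)^2,
\]
respectively. In each case $q(x)$ is a fixed rational multiple of the square of the nonconstant polynomial $t(x)$, so $q(x_0)$ is an integer prime for only finitely many integers $x_0$: never for $k = 3$; only when $t(x_0) = \pm 2$ for $k = 4$; only when $t(x_0) = \pm 3$ for $k = 6$. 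This contradicts the defining requirement that $q(x)$ represent primes, completing the argument.

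The main obstacle is the rigidity step showing that the polynomial identity $D y^2 = P(t)$ has no nonconstant solutions when $P(T)$ has distinct roots; both subcases require separate (though elementary) arguments, and the irrational-roots case additionally depends on the rationality of $\sqrt{aD}$. Once this rigidity is in hand, the rest is a finite discriminant computation plus the observation that a polynomial proportional to a square cannot represent primes infinitely often.
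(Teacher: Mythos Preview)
Your reduction to the identity $Dy(x)^2=P(t(x))$ with $P(T)=4hc\,\Phi_k(T-1)-(T-2)^2$ is correct, and the rigidity argument showing that a \emph{genuine quadratic} $P$ with two distinct roots admits no nonconstant solution $(t,y)$ is valid in both subcases. The gap is that you never check $P$ actually has degree~$2$: its leading coefficient is $4hc-1$, and nothing you have written excludes $hc=\tfrac14$. When $hc=\tfrac14$ the polynomial $P$ is linear---explicitly $P(T)=3(T-1),\,2(T-1),\,(T-1)$ for $k=3,4,6$---and the equation $Dy^2=\lambda(t-1)$ certainly has nonconstant solutions: choose $t-1$ to be any rational multiple of a square. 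The paper's own Example~3.3 is exactly such a configuration ($k=4$, $t(x)=x^2+1$, $D=2$, $y(x)=x$, so $Dy^2=2(t-1)$), with all algebraic constraints satisfied. Thus your discriminant step, which produces only $hc\in\{1,\tfrac12,\tfrac13\}$, misses an entire branch, and the final contradiction does not cover it.

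The paper closes this branch by a direct integrality check: when $hc=\tfrac14$ one has
\[
q(x)=\tfrac14\bigl(X^2+5X+1\bigr),\quad \tfrac14\bigl(X^2+4X+1\bigr),\quad \tfrac14\bigl(X^2+3X+1\bigr)
\]
for $k=3,4,6$ respectively (with $X=t(x)-1$), and a residue computation modulo~$4$ shows none of these is ever an integer for integer $X$, so $q$ cannot represent primes. If you insert this short case, your argument is complete---and in fact organizes the proof more uniformly than the paper, which instead splits on whether $\sqrt{-D}\in\Q(\zeta_k)$ and, in the harder case, uses a somewhat delicate basis computation (Lemma~3.1) to reach the same $hc=\tfrac14$ endpoint.
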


\begin{thm}\label{main thm 3}
\makeatletter
  \parsep   = 0pt
  \labelsep = 5pt
  \def\@listi{%
     \leftmargin = 20pt \rightmargin = 0pt
     \labelwidth\leftmargin \advance\labelwidth-\labelsep
     \topsep     = 0\baselineskip
     \partopsep  = 0pt \itemsep       = 0pt
     \itemindent = 0pt \listparindent = 10pt}
  \let\@listI\@listi
  \@listi
  \def\@listii{%
     \leftmargin = 20pt \rightmargin = 0pt
     \labelwidth\leftmargin \advance\labelwidth-\labelsep
     \topsep     = 0pt \partopsep     = 0pt \itemsep   = 0pt
     \itemindent = 0pt \listparindent = 10pt}
  \let\@listiii\@listii
  \let\@listiv\@listii
  \let\@listv\@listii
  \let\@listvi\@listii
  \makeatother
Let $k=8$ or $12$, and let $D$ be a square-free positive integer.
Suppose that $(t(x), r(x), q(x))$ parameterizes a complete family of elliptic curves with CM discriminant $D$ and embedding degree $k$.
If 
$$
\sqrt{-D} \in \Q(\zeta_k)
\text{\ \ and\ \ }
\deg r(x) \neq 2 \deg t(x),
$$ 
then $\rho(t, r, q) \neq 1$.
Here, $\zeta_k$ is a primitive $k$th root of unity.
\end{thm}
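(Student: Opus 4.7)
I argue by contradiction. Suppose $\rho(t,r,q)=1$, so $\deg q=\deg r$; combined with the divisibility $r\mid q-t+1$ this forces $q(x)=c\,r(x)+t(x)-1$ for some $c\in\Q^{\times}$. Substituting into the CM equation $4q-t^{2}=D y^{2}$ yields the key identity
\[
4c\,r(x)=(t(x)-2)^{2}+D\,y(x)^{2}=\bigl(t-2+\sqrt{-D}\,y\bigr)\bigl(t-2-\sqrt{-D}\,y\bigr)
\]
over $\Q(\sqrt{-D})\subseteq\Q(\zeta_{k})$. A leading-coefficient analysis shows that $\deg y\le\deg t$ would give $\deg r=2\deg t$, contradicting the hypothesis; hence $\deg y>\deg t$, $\deg r=2\deg y$, and, since the two right-hand factors have non-rational leading coefficients, the $\Q$-irreducible $r$ splits over $\Q(\sqrt{-D})$ as $r=r_{1}\bar r_{1}$ with $r_{1}(x)=\lambda\bigl(t(x)-2+\sqrt{-D}\,y(x)\bigr)$ for some $\lambda\in\Q(\sqrt{-D})^{\times}$.

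Because $\sqrt{-D}\in\Q(\zeta_{k})$ and $[\Q(\zeta_{k}):\Q(\sqrt{-D})]=\varphi(k)/2=2$, the cyclotomic polynomial factors as $\Phi_{k}=\psi_{k}\bar\psi_{k}$ over $\Q(\sqrt{-D})$, with $\psi_{k}(y)=y^{2}+\gamma y+\delta$ irreducible of degree two, $\gamma=-(\zeta+\zeta')$ and $\delta=\zeta\zeta'$ for the two Galois-conjugate primitive $k$th roots $\zeta,\zeta'$. From $r\mid\Phi_{k}(t-1)$ and the irreducibility of $r_{1}$ over $\Q(\sqrt{-D})$ I obtain $r_{1}\mid\psi_{k}(t-1)$ after relabeling, so $\deg y\le 2\deg t$. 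Writing the cofactor as $\psi_{k}(t-1)=r_{1}h$ with $\lambda h=A+B\sqrt{-D}$, $A,B\in\Q[x]$, and decomposing $\psi_{k}(t-1)=\mathcal P+\sqrt{-D}\,\mathcal Q$ via $\gamma=\gamma_{0}+\gamma_{1}\sqrt{-D}$, $\delta=\delta_{0}+\delta_{1}\sqrt{-D}$, the divisibility unpacks into the paired $\Q$-polynomial identities
\[
A(t-2)-DBy=\mathcal P,\qquad B(t-2)+Ay=\mathcal Q=\gamma_{1}(t-1)+\delta_{1}.
\]

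The decisive step is to exploit the second identity: using $\deg\mathcal Q\le\deg t<\deg y$ together with the observation that $r_{1}$ must factor in $\Q(\zeta_{k})[x]$ into two Galois-conjugate pieces (so that $\deg y$ is forced to be even), a careful degree analysis drives one to $A=0$, $\deg y=2\deg t$, and $B\in\Q^{\times}$; matching coefficients then reduces the problem to the simultaneous requirements $B=\gamma_{1}$ and $\delta_{1}=-\gamma_{1}$. The main obstacle is the case-by-case verification that these fail for each of the four admissible pairs $(k,D)\in\{(8,1),(8,2),(12,1),(12,3)\}$. Direct computation from $\mathrm{Gal}(\Q(\zeta_{k})/\Q(\sqrt{-D}))$ yields $\gamma=0$ (hence $\gamma_{1}=0$) in $(8,1)$ and $(12,3)$, so no $B\in\Q^{\times}$ equals $\gamma_{1}$; while $(\gamma_{1},\delta_{1})=(-1,0)$ in $(8,2)$ and $(12,1)$, so $\delta_{1}=0\neq 1=-\gamma_{1}$. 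The system is inconsistent in every case, proving $\rho(t,r,q)\neq 1$.
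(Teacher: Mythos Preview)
Your overall framework---factoring $4cr(x)=(t-2+\sqrt{-D}\,y)(t-2-\sqrt{-D}\,y)$ over $\Q(\sqrt{-D})$, matching it against the splitting $\Phi_k=\psi_k\bar\psi_k$, and extracting the paired $\Q$-identities for $A,B$---is sound and genuinely different from the paper's approach. The paper never factors over $\Q(\sqrt{-D})$; it works directly with the polynomial representatives $P(\alpha)\in\Q[x]$ of elements $\alpha\in\Q(\zeta_k)$ modulo $r(x)$, computes $P((\zeta-1)\sqrt{-D})$ and $P(\zeta(\zeta-1)\sqrt{-D})$ explicitly as small $\Z$-combinations of $P(\zeta^j)$, and uses the inequality $\deg t+\deg y<\deg r$ to avoid reduction when multiplying, obtaining the contradiction $2\deg t\ge \deg t+\deg y$.

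However, your ``decisive step'' has a real gap. Set $m=\deg t$ and $\deg r=4n$, so the working range is $n\le m<2n$ and $\deg y=2n$. Solving your linear system for $A$ by Cramer's rule gives
\[
4c\,r(x)\,A(x)=(t-2)\mathcal P+D\,y\,\mathcal Q .
\]
When $m>n$ the right-hand side has exact degree $3m$ with leading term coming from $(t-2)(t-1)^2$; divisibility by $r$ (degree $4n$) then forces $3m\ge 4n$, and in that case $\deg A=3m-4n\ge 0$, so $A\neq 0$. Thus your asserted conclusion ``$A=0$, $\deg y=2\deg t$'' is simply false throughout the subrange $\tfrac{4n}{3}\le m<2n$, and your subsequent case-check (which presupposes $A=0$ and $B$ constant) does not apply there. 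The argument as written only covers $m=n$ (where $h$ is constant) and, implicitly, $n<m<\tfrac{4n}{3}$ (where the displayed identity has degree $<4n$ and is nonzero, a contradiction). The paper's representative-degree argument handles the entire range $n\le m<2n$ uniformly; to salvage your route you would need a replacement for the unproven ``careful degree analysis'' in the missing subrange.
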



In the following, we denote the $k$th cyclotomic polynomial and Euler's totient function as $\Phi_k(x)$ and $\varphi(x)$, respectively.
Note that it is known that $r(x)$ is a factor of $\Phi_k(t(x)-1)$, and 
the degree of $r(x)$ is a multiple of $\varphi(k)$.
Moreover, note that if $\deg t(x)=2$, then $\Phi_8(t(x)-1)$ is irreducible in the case where $k=8$ (see \cite{GMV}), and, in the case $k=12$, a family with $\rho(t, r, q)=1$ is given by \cite{BN}.



\section{Family of Pairing-Friendly Elliptic Curves}\label{ps-friendly}

In this section, we briefly explain the strategy for constructing complete families of pairing-friendly elliptic curves as proposed by Brezing and Weng \cite{BW}.
In the following, we use the notation $\Z$ and $\Q$ for the set of rational integers and rational numbers, respectively.
For an elliptic curve $E/\F_q$, define $t$ by the trace of the Frobenius map on $E$.
Then, the order of $E(\F_q)$ is described as 
$\# E(\F_q)=q+1-t$.

\begin{df}\label{def of embed deg}
Let $E$ be an elliptic curve over $\F_q$.
Suppose that $E(\F_q)$ has a subgroup of order $r$ with 
$\gcd(r,q)=1$.
The embedding degree of $E$ with respect to $r$ is the extension degree
$[\F_q(\mu_r):\F_q]$.
Here, $\mu_r$ is the group of all $r$th roots of unity in an algebraic closure of $\F_q$.
\end{df}

\begin{prop}\label{equiv of embed deg}
{\rm (\cite[Remark 2.2 and Proposition 2.4]{FST})}
\makeatletter
  \parsep   = 0pt
  \labelsep = 5pt
  \def\@listi{%
     \leftmargin = 20pt \rightmargin = 0pt
     \labelwidth\leftmargin \advance\labelwidth-\labelsep
     \topsep     = 0\baselineskip
     \partopsep  = 0pt \itemsep       = 0pt
     \itemindent = 0pt \listparindent = 10pt}
  \let\@listI\@listi
  \@listi
  \def\@listii{%
     \leftmargin = 20pt \rightmargin = 0pt
     \labelwidth\leftmargin \advance\labelwidth-\labelsep
     \topsep     = 0pt \partopsep     = 0pt \itemsep   = 0pt
     \itemindent = 0pt \listparindent = 10pt}
  \let\@listiii\@listii
  \let\@listiv\@listii
  \let\@listv\@listii
  \let\@listvi\@listii
  \makeatother
Assume that $r \mid \# E(\F_q)$ is prime relative to $q$. 
Then, the following two conditions are equivalent:
\begin{enumerate}
\item[{\rm (i)}]
$E$ has embedding degree $k$ with respect to $r$.
\item[{\rm (ii)}]
$k$ is the smallest positive integer such that 
$r \mid q^k-1$.
\end{enumerate}
Moreover, if $r$ is prime such that $r \nmid kq$, then $E$ has embedding degree $k$ with respect to $r$ if and only if 
$r \mid \Phi_k(t-1)$.
\end{prop}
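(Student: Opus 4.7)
The plan is to break the proposition into its two assertions and prove each directly from the definition of embedding degree as $[\F_q(\mu_r):\F_q]$. For the equivalence of (i) and (ii), I would first observe that since $\gcd(r,q)=1$, the group $\mu_r$ is cyclic of order $r$ in $\overline{\F_q}$. The unique extension $\F_{q^m}/\F_q$ has multiplicative group $\F_{q^m}^\times$ cyclic of order $q^m-1$, hence $\mu_r \subseteq \F_{q^m}$ if and only if $r \mid q^m - 1$. Consequently the smallest $m$ with $\mu_r \subseteq \F_{q^m}$, i.e.\ $[\F_q(\mu_r):\F_q]$, coincides with the smallest positive integer $m$ such that $r \mid q^m-1$. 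This is the whole content of (i) $\Leftrightarrow$ (ii).

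For the ``moreover'' statement, the starting point is $\#E(\F_q)=q+1-t$, so the hypothesis $r \mid \#E(\F_q)$ gives $q \equiv t-1 \pmod r$. Hence $r \mid q^m-1 \iff r \mid (t-1)^m-1$ for every $m$, and by the first part the embedding degree equals the multiplicative order of $t-1$ in $(\Z/r\Z)^\times$. The task is therefore to show that, under $r \nmid kq$, the element $t-1 \bmod r$ has order exactly $k$ if and only if $r \mid \Phi_k(t-1)$.

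The key algebraic input is the factorization $x^k-1 = \prod_{d \mid k} \Phi_d(x)$. Because $r$ is prime and $r \nmid k$, the derivative $kx^{k-1}$ shares no root with $x^k-1$ over $\overline{\F_r}$, so $x^k-1$ is separable mod $r$ and the factors $\Phi_d(x) \bmod r$ are pairwise coprime. Counting degrees shows that the $\varphi(k)$ roots of $\Phi_k(x) \bmod r$ are precisely the elements of order exactly $k$ in $\overline{\F_r}^\times$. Applied to $\bar q = t-1 \bmod r$: if $r \mid \Phi_k(t-1)$, then $\bar q$ is a primitive $k$th root of unity in $\F_r$, so has order $k$; conversely, if $\bar q$ has order $k$, then $\bar q$ is a root of some $\Phi_d$ with $d \mid k$, and the coprimality forces $d=k$.

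I do not expect any genuine obstacle here; the argument is bookkeeping around the cyclicity of finite multiplicative groups and the cyclotomic factorization of $x^k-1$. The only step that requires a moment of care is the separability claim, which is where the hypothesis $r \nmid k$ enters, and the parallel hypothesis $r \nmid q$ (built into $\gcd(r,q)=1$) is what lets one identify the embedding degree with an order in $(\Z/r\Z)^\times$ in the first place.
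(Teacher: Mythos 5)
Your proof is correct. The paper does not prove this proposition at all --- it is quoted directly from \cite[Remark 2.2 and Proposition 2.4]{FST} --- and your argument (cyclicity of $\F_{q^m}^\times$ for the equivalence of (i) and (ii), then $q\equiv t-1 \pmod r$ together with the separability of $x^k-1$ bmod $r$ and the coprime factorization $x^k-1=\prod_{d\mid k}\Phi_d(x)$ for the ``moreover'' part) is exactly the standard one given in that reference, with the hypotheses $r\nmid q$ and $r\nmid k$ entering where they should.
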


We now describe the CM method as proposed by Atkin and Morain \cite{AM}, which is a strategy for constructing elliptic curves with given parameters.

\begin{thm}\label{AM}{\rm (Atkin and Morain \cite{AM})}
\makeatletter
  \parsep   = 0pt
  \labelsep = 5pt
  \def\@listi{%
     \leftmargin = 20pt \rightmargin = 0pt
     \labelwidth\leftmargin \advance\labelwidth-\labelsep
     \topsep     = 0\baselineskip
     \partopsep  = 0pt \itemsep       = 0pt
     \itemindent = 0pt \listparindent = 10pt}
  \let\@listI\@listi
  \@listi
  \def\@listii{%
     \leftmargin = 20pt \rightmargin = 0pt
     \labelwidth\leftmargin \advance\labelwidth-\labelsep
     \topsep     = 0pt \partopsep     = 0pt \itemsep   = 0pt
     \itemindent = 0pt \listparindent = 10pt}
  \let\@listiii\@listii
  \let\@listiv\@listii
  \let\@listv\@listii
  \let\@listvi\@listii
  \makeatother
Let $k$ be a positive integer.
Suppose that there are some $t,r,q$ satisfying the following properties:
\begin{enumerate}
\item[{\rm (i)}]
$r$ is a prime number.
\item[{\rm (ii)}]
$q$ is a power of a prime number.
\item[{\rm (iii)}]
$r \mid q+1-t$ and $\gcd(t,q)=1$.
\item[{\rm (iv)}]
$r \mid q^k-1$ and $r \nmid q^i-1$ for all $1 \le i <k$.
\item[{\rm (v)}]
There exist some $y \in \Z$ and some square-free positive integer $D$ such that an equation
$Dy^2=4q-t^2$
holds
{\rm (}$D$ is called a CM discriminant{\rm )}.
\end{enumerate}
Then, there exists an ordinary elliptic curve $E$ over $\F_q$ that satisfies the following:
\begin{enumerate}
\item[{\rm (a)}]
The order $\# E(\F_q)$ of $E(\F_q)$ is 
$\# E(\F_q) = q+1-t$, 
and there is a subgroup of $E(\F_q)$ with prime order $r$.
\item[{\rm (b)}]
The embedding degree with respect to $r$ is $k$.
\end{enumerate}
\end{thm}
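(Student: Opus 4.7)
The plan is to follow the classical CM construction, as refined by Atkin--Morain. Hypothesis (v), $4q - t^2 = Dy^2$, shows that $\pi := (t + y\sqrt{-D})/2$ is an element of the imaginary quadratic field $K := \Q(\sqrt{-D})$ with $\pi\bar\pi = q$ and $\pi + \bar\pi = t$; thus $\pi$ is a root of $X^2 - tX + q$ and is a $q$-Weil number. Since $\gcd(t,q) = 1$, we have $\pi \notin \Z$, so $\pi$ generates an order $\mathcal{O} := \Z[\pi]$ of $K$ strictly larger than $\Z$, and any elliptic curve over $\F_q$ having $\pi$ as its Frobenius will automatically be ordinary.

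First I would invoke the theory of the Hilbert class polynomial $H_\mathcal{O}(X) \in \Z[X]$ of $\mathcal{O}$, whose complex roots are the $j$-invariants of elliptic curves with CM by $\mathcal{O}$. The equation $4q = t^2 + Dy^2$ exhibits $q$ as split in $K$ and coprime to the conductor of $\mathcal{O}$, so by the reduction theory of CM elliptic curves (Deuring) the polynomial $H_\mathcal{O}$ has a root $j_0 \in \F_q$. Any $E_0/\F_q$ with $j(E_0) = j_0$ is then ordinary with CM by $\mathcal{O}$, and its geometric Frobenius coincides with $\pi$ up to a unit in $\mathcal{O}^\times$ and a sign.

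Next I would select the correct twist. By replacing $E_0$ with an appropriate quadratic, quartic, or sextic twist (the latter two available precisely when $j_0 = 1728$ or $j_0 = 0$), one arranges the Frobenius to be $\pi$ itself, giving $\#E(\F_q) = q + 1 - t$. Combined with hypothesis (iii), $r \mid q+1-t$ with $r$ prime and $\gcd(r,q)=1$, this yields a subgroup of $E(\F_q)$ of order $r$; hence (a). Property (b) is then immediate from Proposition~\ref{equiv of embed deg} applied to hypothesis (iv), which is exactly the characterization of embedding degree $k$ with respect to $r$.

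The main obstacle is the twist-selection step: Deuring's lifting determines the Frobenius only up to a unit multiple of $\pm\pi$, and one must check that the available twists exhaust this ambiguity so as to realize the specified trace $+t$ rather than $-t$ or $\zeta\pi$ for some root of unity $\zeta \in \mathcal{O}^\times$. For $D \neq 1, 3$ (where $\mathcal{O}^\times = \{\pm 1\}$) a single quadratic twist suffices, but the exceptional cases $D = 1$ and $D = 3$, corresponding to $j = 1728$ and $j = 0$, require the extra quartic and sextic twists; this is precisely the delicate point in the Atkin--Morain analysis.
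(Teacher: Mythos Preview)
The paper does not prove this theorem at all: it is stated as a result of Atkin and Morain \cite{AM} and cited without proof, serving only as background for the definition of complete families (Definition~\ref{family of p-f cv}) that follows. There is therefore nothing in the paper to compare your argument against.

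That said, your sketch is the standard CM-method argument and is essentially correct as an outline. A couple of small points worth tightening if you were to write it out in full: (1) the order $\Z[\pi]$ need not be the maximal order, so one should be careful to work with the ring class polynomial for the correct order (or enlarge to $\mathcal{O}_K$ and note that the resulting curve still has $\Z[\pi]$ contained in its endomorphism ring); (2) you correctly flag the twist ambiguity, but note that when $q$ is a proper prime power rather than a prime, twists over $\F_q$ may behave slightly differently, and Deuring's theorem is usually stated for $q$ prime---the prime-power case requires a bit more care via Honda--Tate. These are refinements rather than gaps, and your identification of the twist-selection step as the crux is exactly right.
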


For applications, it is necessary to be able to construct curves of a specified bit size.
To this end, to describe the families of pairing-friendly curves, we give the four parameters 
$t$, $r$, $q$, and $y$ in Theorem \ref{AM}, and these are given as the polynomials $t(x)$, $r(x)$, $q(x)$, and $y(x)$ with a parameter $x$.
According to \cite{FST}, we introduce the following definition, which is based on the conjecture of Bouniakowski and Schinzel (see \cite[p. 323]{Lang}).

\begin{df}\label{rep primes}
\makeatletter
  \parsep   = 0pt
  \labelsep = 5pt
  \def\@listi{%
     \leftmargin = 20pt \rightmargin = 0pt
     \labelwidth\leftmargin \advance\labelwidth-\labelsep
     \topsep     = 0\baselineskip
     \partopsep  = 0pt \itemsep       = 0pt
     \itemindent = 0pt \listparindent = 10pt}
  \let\@listI\@listi
  \@listi
  \def\@listii{%
     \leftmargin = 20pt \rightmargin = 0pt
     \labelwidth\leftmargin \advance\labelwidth-\labelsep
     \topsep     = 0pt \partopsep     = 0pt \itemsep   = 0pt
     \itemindent = 0pt \listparindent = 10pt}
  \let\@listiii\@listii
  \let\@listiv\@listii
  \let\@listv\@listii
  \let\@listvi\@listii
  \makeatother
Let $f(x)$ be a polynomial in $\Q[x]$.
\begin{enumerate}
\item[{\rm (i)}]
If there is some $a \in \Z$ such that $f(a) \in \Z$, then we say that $f(x)$ represents integers.
\item[{\rm (ii)}]
Assume that $f(x)$ is nonconstant, irreducible, and represents integers.
If $f(x)$ has a positive leading coefficient and 
\[
{\rm gcd}(\{ f(x) |\; x\ such\ that\ f(x) \in \Z\})=1,
\]
then we say that $f(x)$ represents primes.
\end{enumerate}
\end{df}

Bouniakowski, Schinzel, and others conjectured that if $f(x)$ represents primes, then $f(x)$ has infinitely many prime values.

\begin{df}\label{family of p-f cv}
\makeatletter
  \parsep   = 0pt
  \labelsep = 5pt
  \def\@listi{%
     \leftmargin = 20pt \rightmargin = 0pt
     \labelwidth\leftmargin \advance\labelwidth-\labelsep
     \topsep     = 0\baselineskip
     \partopsep  = 0pt \itemsep       = 0pt
     \itemindent = 0pt \listparindent = 10pt}
  \let\@listI\@listi
  \@listi
  \def\@listii{%
     \leftmargin = 20pt \rightmargin = 0pt
     \labelwidth\leftmargin \advance\labelwidth-\labelsep
     \topsep     = 0pt \partopsep     = 0pt \itemsep   = 0pt
     \itemindent = 0pt \listparindent = 10pt}
  \let\@listiii\@listii
  \let\@listiv\@listii
  \let\@listv\@listii
  \let\@listvi\@listii
  \makeatother
Let $k$ be a positive integer, and let $D$ be a positive square-free integer.
Suppose that a triple of nonzero polynomials $(t(x),r(x),q(x)) \in \Q[x]^{3}$ satisfies the following conditions:
\begin{enumerate}
\item[{\rm (i)}] 
$r(x)$ represents primes.
\item[{\rm (ii)}] 
$q(x)$ represents primes.
\item[{\rm (iii)}] 
$r(x) \mid q(x)+1-t(x)$, i.e., there exists $h(x) \in \Q[x]$ such that $h(x)r(x)=q(x)+1-t(x)$.
\item[{\rm (iv)}]
$r(x) \mid \Phi_k(t(x)-1)$.
\item[{\rm (v)}] 
There is some $y(x) \in \Q[x]$ such that
$$
Dy(x)^2
=
4q(x)-t(x)^2
=
4h(x)r(x)-(t(x)-2)^2.
$$
\end{enumerate}
Then, we say that $(t(x),r(x),q(x))$ parameterizes a complete family of pairing-friendly elliptic curves with embedding degree $k$ and CM discriminant $D$.
Moreover, we define
$$
\rho(t,r,q):=
\frac{\deg q(x)}{\deg r(x)}=
\frac{2 \max\{ \deg y(x), \deg t(x)\}}{\deg r(x)}.
$$
\end{df}

If $(t(x),r(x),q(x))$ parameterizes a complete family of pairing-friendly elliptic curves with embedding degree $k$,
then $r(x)$ defines a field $\Q[x]/(r(x))$ that is isomorphic to a field containing the $k$th cyclotomic field and the imaginary quadratic field $\Q(\sqrt{-D})$, 
and by Definition \ref{family of p-f cv} (iv)(v), $t(x)-1$ corresponds to a primitive $k$th root of unity.

Next, we describe the Brezing-Weng method, which is a generalization of the Cocks-Pinch method \cite{CP}.

\begin{thm}[Brezing-Weng \cite{BW}]\label{BW}
\makeatletter
  \parsep   = 0pt
  \labelsep = 5pt
  \def\@listi{%
     \leftmargin = 20pt \rightmargin = 0pt
     \labelwidth\leftmargin \advance\labelwidth-\labelsep
     \topsep     = 0\baselineskip
     \partopsep  = 0pt \itemsep       = 0pt
     \itemindent = 0pt \listparindent = 10pt}
  \let\@listI\@listi
  \@listi
  \def\@listii{%
     \leftmargin = 20pt \rightmargin = 0pt
     \labelwidth\leftmargin \advance\labelwidth-\labelsep
     \topsep     = 0pt \partopsep     = 0pt \itemsep   = 0pt
     \itemindent = 0pt \listparindent = 10pt}
  \let\@listiii\@listii
  \let\@listiv\@listii
  \let\@listv\@listii
  \let\@listvi\@listii
  \makeatother
Let $k$ be a positive integer, and let $D$ be a positive square-free integer.
Then, execute the following steps.
\begin{enumerate}
\item
Choose an algebraic number field $K$ that contains the $k$th cyclotomic field and $\Q(\sqrt{-D})$.
\item
Find an irreducible polynomial $r(x) \in \Z[x]$ with positive leading coefficient and an isomorphism such that $\Q[x]/(r(x)) \up{\sim}{\to} K$.
\item
Let $t(x)-1 \in \Q[x]$ be a polynomial mapping to a fixed $k$th root of unity $\zeta_k \in K$ by the above isomorphism.
\item
Let $y(x) \in \Q[x]$ be a polynomial mapping to $\frac{\zeta_k-1}{\sqrt{-D}} \in K$ by the above isomorphism.
\item
Let $q(x) \in \Q[x]$ be given by
$q(x):=\frac{1}{4}(t(x)^2+Dy(x)^2)$.
\end{enumerate}
If $q(x)$ and $r(x)$ represent primes,
then the triple $(t(x),r(x),q(x))$ parameterizes a complete family of elliptic curves with embedding degree $k$ and CM discriminant $D$.
We note that $t(x)$, $y(x)$ are determined up to modulus $r(x)$.
\end{thm}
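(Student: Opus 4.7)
The plan is to verify, one by one, the five conditions (i)--(v) of Definition \ref{family of p-f cv} for the triple $(t(x),r(x),q(x))$ produced by the five-step construction. Conditions (i) and (ii) are exactly the added hypothesis of the theorem. Condition (v) is immediate from step 5: rearranging $4q(x)=t(x)^2+Dy(x)^2$ gives $Dy(x)^2=4q(x)-t(x)^2$ in $\Q[x]$, and the second equality in (v) follows automatically once (iii) is established.

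The substance is verifying (iv) and (iii), and both are obtained by computing in the field $K \isom \Q[x]/(r(x))$ under the isomorphism of step 2. By construction the class of $t(x)-1$ maps to $\zeta_k$, so $\Phi_k(t(x)-1)$ maps to $\Phi_k(\zeta_k)=0$, giving $r(x)\mid \Phi_k(t(x)-1)$ in $\Q[x]$; this is condition (iv). For (iii), the class of $y(x)$ maps to $(\zeta_k-1)/\sqrt{-D}$, so
\[
4q(x) \;=\; t(x)^2+Dy(x)^2 \;\longmapsto\; (\zeta_k+1)^2 + D\cdot\frac{(\zeta_k-1)^2}{-D} \;=\; 4\zeta_k.
\]
Hence $q(x)$ reduces to $\zeta_k$ modulo $r(x)$, and then $q(x)+1-t(x)$ reduces to $\zeta_k+1-(\zeta_k+1)=0$, i.e.\ $r(x) \mid q(x)+1-t(x)$, which is condition (iii).

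The verification presents essentially no obstacle; the only thing requiring a little care is ensuring that the lifts $t(x),y(x)\in\Q[x]$ of $\zeta_k+1$ and $(\zeta_k-1)/\sqrt{-D}$ across the isomorphism $\Q[x]/(r(x))\isom K$ really produce a well-defined $q(x)\in\Q[x]$ (which they do, since $\Q[x]$ is a ring and $q(x)$ is built from polynomial operations on $t(x)$ and $y(x)$) and that vanishings computed in $K$ faithfully correspond to divisibilities by $r(x)$ in $\Q[x]$ (which is the defining property of the quotient). Once all five conditions hold, Definition \ref{family of p-f cv} applies and $(t(x),r(x),q(x))$ parameterizes a complete family of pairing-friendly elliptic curves with embedding degree $k$ and CM discriminant $D$.
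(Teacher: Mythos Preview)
Your verification is correct: conditions (i) and (ii) are the stated hypotheses, (v) is the definition of $q(x)$, and (iv) and (iii) follow by computing images under $\Q[x]/(r(x))\isom K$ exactly as you do. Note, however, that the paper does not supply its own proof of this theorem---it is quoted as a result of Brezing and Weng \cite{BW}---so there is no in-paper argument to compare against; your direct check of Definition~\ref{family of p-f cv} is the standard one.
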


\begin{rem}\label{BN}
{\rm The choice of $r(x)$ is an important part of this algorithm.
When $r(x)$ is chosen to be a cyclotomic polynomial, this yields the most popular complete family;
this is called the cyclotomic case.
Several examples are collected in \cite{FST} and \cite{KSS}.
Barreto and Naehrig \cite{BN} gave an example of $\rho(t,r,q)=1$ with $k=12$ and $D=3$:
\begin{eqnarray*}
&
\begin{cases}
t(x)=6x^2+1,\ 
\\
r(x)=36x^4+36x^3+18x^2+6x+1,\ 
\ \ 
\\
q(x)=36x^4+36x^3+24x^2+6x+1.
\end{cases}
&
\end{eqnarray*}
This is the {\it only} known example of $(t(x),r(x),q(x))$ that parameterizes a complete family of curves with a $\rho$-value of $1$.
}
\end{rem}



\section{Proof of Theorem \ref{main thm 1}}
First, we show Theorem \ref{main thm 1} for the case where $\sqrt{-D} \in \Q(\zeta_k)$.
The proof is similar to the proof shown in \cite[Proposition 4.1]{Okano2012}.
Let $\zeta_k$ be a primitive $k$th root of unity corresponding to $t(x)-1$ under a fixed isomorphism 
$\Q[x]/(r(x)) \isom K \supset \Q(\zeta_k)$.
For simplicity, put $X=t(x)-1$.
Assume that 
$\deg t(x)<\deg r(x)$ and $\deg y(x)<\deg r(x)$.

Suppose that $k=4$.
Then, $D=1$, and so $\sqrt{-1}$ corresponds to $s(x)=\pm X$.
Therefore, we have
\begin{eqnarray*}
y(x) 
&\equiv&
\frac{(X-1)s(x)}{-1}
	=\mp (X^2-X)
\\
&\equiv&
\pm (X+1)
\mod r(x),
\end{eqnarray*}
since $r(x) \mid \Phi_4(X)=X^2+1$.
Hence,
$q(x) 
=
\frac{1}{4}\left(t(x)^2+Dy(x)^2\right)
=\frac{1}{2}(X+1)^2.
$
This contradicts the assumption that $q(x)$ represents primes.
Therefore, 
$\deg t(x) \ge \deg r(x)$ or $\deg y(x) \ge \deg r(x)$, 
and so $\rho(t,r,q) \ge 2$ if $k=4$.
In the same way, we obtain $\rho(t,r,q) \ge 2$ for the cases $k=3$ and $6$:

Suppose that $k=3$.
Then $D=3$, and so $\sqrt{-3}=\pm (2\zeta_3+1)$ corresponds to $s(x)=\pm (2X+1)$.
Therefore,
\begin{eqnarray*}
y(x) 
&\equiv&
\frac{(X-1)s(x)}{-3}
	=\mp \frac{2X^2-X-1}{3}
\\
&\equiv&
\pm (X+1)
\mod r(x),
\end{eqnarray*}
since $r(x) \mid \Phi_3(X)=X^2+X+1$.
Hence,
$
q(x) 
=
(X+1)^2.
$
This contradicts the assumption that $q(x)$ represents primes.

Suppose that $k=6$.
Then $D=3$, and so $\sqrt{-3}=\pm (2\zeta_6-1)$ corresponds to $s(x)=\pm (2X-1)$.
Therefore,
\begin{eqnarray*}
y(x) 
&\equiv&
\frac{(X-1)s(x)}{-3}
	=\mp \frac{2X^2-3X+1}{3}
\\
&\equiv&
\pm \frac{1}{3}(X+1)
\mod r(x),
\end{eqnarray*}
since $r(x) \mid \Phi_6(X)=X^2-X+1$.
Therefore,
$
q(x) 
=
\frac{1}{3}(X+1)^2.
$
This contradicts the assumption that $q(x)$ represents primes.

\begin{rem}
These conditions of $r(x)$, $t(x)$, and $q(x)$ in the proof mean that the family provides supersingular elliptic curves.
\end{rem}



Next, we show the case where $\sqrt{-D} \notin \Q(\zeta_k)$.
Assume that $\rho(t,r,q)=1$.
Put $X=t(x)-1$ and $m=\deg t(x)$.
Note that $m>1$, since we assume that $\sqrt{-D} \notin \Q(\zeta_k)$.
Then, we may assume that
$$
r(x)=\Phi_k(X),
$$ 
since $r(x) \mid \Phi_k(X)$ and $1=\rho(t,r,q) \ge \frac{2\deg t(x)}{\deg r(x)}$.

\begin{lem}\label{deg y}
If $k=3$, $4$, or $6$ and $\rho(t,r,q)=1$, then
$
\deg y(x)=\frac{m}{2}.
$
\end{lem}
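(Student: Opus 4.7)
The plan is to condense conditions (iii) and (v) of Definition~\ref{family of p-f cv} into a single polynomial identity and then analyse it via degree and leading-coefficient comparisons, reducing to a Pell-type statement that rules out every possibility except $h=1/4$.

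Combining $h(x)r(x)=q(x)+1-t(x)$ with $Dy(x)^2=4q(x)-t(x)^2$ yields
\[
Dy(x)^2+(t(x)-2)^2=4h(x)\,r(x).
\]
Since $r(x)=\Phi_k(t(x)-1)$ has degree $2m$ and $\deg y\le m$ (from $\rho=1$), the left-hand side has degree $\le 2m$, so $h(x)$ must be a constant $h\in\Q$. Writing $u=t(x)-1$ and using the explicit form of $\Phi_k$ for $k\in\{3,4,6\}$, the identity becomes $Dy(x)^2=(4h-1)u^2+B_k u+(4h-1)$ with $B_k=4h+2,\,2,\,2-4h$ respectively. Matching leading coefficients in the original identity (the leading coefficient of $r$ is the square of that of $t$, since $\varphi(k)=2$) produces a dichotomy: either $h=1/4$, in which case the right-hand side collapses to $c_k(t(x)-1)$ with $c_k\in\{3,2,1\}$ and therefore $\deg y=m/2$ as claimed; or $h>1/4$ and $\deg y=m$. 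The task is to eliminate the second alternative.

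Assume $h>1/4$. Set $\Delta=B_k^2-4(4h-1)^2$ and factor the quadratic in $u$ as $(4h-1)(u-u_+)(u-u_-)$ with $u_\pm\in\Q(\sqrt{\Delta})$. When $\Delta\ne 0$, put $F=\Q(\sqrt{\Delta})$: the polynomials $P_\pm(x):=t(x)-1-u_\pm$ are coprime in $F[x]$ (their difference $u_--u_+$ is a nonzero constant) and each has degree $m$, while $(4h-1)P_+P_-=Dy^2$. Unique factorization in the UFD $F[x]$ then forces $P_\pm=c_\pm g_\pm(x)^2$ for some $c_\pm\in F^\times$ and $g_\pm\in F[x]$ with $\deg g_\pm=m/2$. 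Subtracting the two relations yields the Pell-type equality
\[
c_+g_+(x)^2-c_-g_-(x)^2=u_--u_+,
\]
a nonzero constant. Matching leading coefficients gives $c_+/c_-=\mu^2\in F^{\times 2}$, so the left-hand side factors as $c_-(\mu g_+-g_-)(\mu g_++g_-)$. Since two polynomials in $F[x]$ whose product is a nonzero constant must themselves be constants, $g_\pm$ are constants, contradicting $\deg g_\pm=m/2\ge 1$ (the lower bound holds because $m$ is even and $m>1$, so $m\ge 2$). The degenerate subcase $\Delta=0$ gives $h\in\{1,1/2,1/3\}$ for $k\in\{3,4,6\}$, and the identity becomes $(4h-1)(t(x)-1-u_0)^2=Dy(x)^2$, which demands $(4h-1)/D\in\Q^{\times 2}$; since $D$ is squarefree this forces $D=3,1,3$ respectively, each excluded by $\sqrt{-D}\notin\Q(\zeta_k)$.

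The conceptually delicate step is the Pell-type argument: one must descend to the correct quadratic extension $\Q(\sqrt{\Delta})$ before invoking unique factorization, and then exploit the rigidity that a nonzero constant polynomial cannot split nontrivially as a product of positive-degree factors in $F[x]$. Everything else --- the degree and leading-coefficient bookkeeping, and the short verification in the discriminant-zero subcase --- is routine.
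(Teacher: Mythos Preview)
Your proof is correct and takes a genuinely different route from the paper's. The paper works with the element $s(x)$ corresponding to $\sqrt{-D}$ under the isomorphism $\Q[x]/(r(x))\cong K$, expands $(X-1)s(x)$ in the basis $\{x^jX^i:0\le i\le 1,\ 0\le j\le m-1\}$ of $\Q[x]_{2m-1}$, and does explicit coefficient matching (separately for each $k$) to show that $-Dy(x)$ is a polynomial of degree $<m$ in $x$; the constraint $y(x)^2\in\Q[X]$ then forces $\deg y=m/2$. You bypass $s(x)$ entirely: once the problem is reduced to the single identity $Dy^2=(4h-1)u^2+B_ku+(4h-1)$ with $u=t(x)-1$, you factor the right-hand quadratic over $F=\Q(\sqrt{\Delta})$ and run a UFD/Pell argument to exclude $h>1/4$. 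Your approach is more structural, treats the three values of $k$ uniformly, and in effect proves a little more (it directly forces $h=1/4$, whereas the paper only reads off $h=1/4$ \emph{after} the lemma). One small point you leave implicit: the assertion that $m$ is even. This follows either from the UFD step itself (each $P_\pm$, being a constant times a square in $F[x]$, has even degree $m$) or, independently, from $4=[\Q(\zeta_k,\sqrt{-D}):\Q]\mid\deg r(x)=2m$.
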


\begin{proof}
Since $\deg X=m$, the $\Q$-vector space $\Q[x]_{2m-1}$, which consists of all polynomials with degree less than $2m=\deg r(x)$, has bases consisting of
$$
\begin{matrix}
x^{m-1}X, & x^{m-2}X, & \ldots, & xX, & X,
\\
x^{m-1}, & x^{m-2}, & \ldots, & x, & 1.
\end{matrix}
$$
We consider the polynomial $(X-1)s(x)$ that is congruent with $-Dy(x)$ modulo $r(x)$.
We can write $s(x)\in \Q[x]_{2m-1}$ uniquely as
$$
s(x)=
(F_1(x)x+a_1)X+(F_2(x)x+a_2)
$$
for some $a_i \in \Q$ and $F_i(x)$ that satisfy $\deg F_i(x) \le m-2$ ($i=1,2$).
Therefore,
\begin{eqnarray}\label{(X-1)s(x)_1}
&&
(X-1)s(x)
\nonumber
\\
&&
=
F_1(x)xX^2+a_1X^2+(F_2(x)-F_1(x))xX+(a_2-a_1)X-(F_2(x)x+a_2).
\end{eqnarray}
Note that $F_1(x)xX^2$, $a_1X^2$, and $(F_1(x)-F_2(x))xX$ do not have any terms that are of the same degree as those of the others.
On the other hand, dividing $(X-1)s(x)$ by $r(x)=\Phi_k(X)$, there are some $G(x) \in \Q[x]$ and $b \in \Q$ such that
\begin{eqnarray}\label{(X-1)s(x)_2}
(X-1)s(x)=
-Dy(x)+(G(x)x+b)\Phi_k(X).
\end{eqnarray}
Here, $\deg G(x) \le m-2$, since $\deg \left( (X-1)s(x) \right) <m+2m=3m$ and $\deg y(x)<2m$.

If $k=4$, then $\Phi_4(X)=X^2+1$.
Thus, the right-hand side of (\ref{(X-1)s(x)_2}) becomes
\begin{eqnarray}\label{(X-1)s(x)_2.4}
-Dy(x)+G(x)xX^2+bX^2+G(x)x+b.
\end{eqnarray}
Since $\deg y(x)\le m$, the terms in (\ref{(X-1)s(x)_1}) and (\ref{(X-1)s(x)_2.4}) of degree greater than $m$ coincide.
Therefore, we obtain
$G(x)=F_1(x)=F_2(x)$, $a_1=b$.
Thus,
$$
-Dy(x)=
(a_2-a_1)X-2F_1(x)x-(a_2+a_1),
$$
and so 
\begin{eqnarray}\label{D^2y(x)^2 k=4}
&&
D^2y(x)^2
\\
&&
=
-4(a_2-a_1)F_1(x)xX+4F_1(x)^2x^2+4(a_2+a_1)F_1(x)x+(\text{polynomial of $X$}).
\nonumber 
\end{eqnarray}
Since $r(x)=\Phi_4(X) \in \Q[X]$, we see that
$Dy(x)^2=4hr(x)-(X-1)^2 \in \Q[X]$.
If we combine this with (\ref{D^2y(x)^2 k=4}) and $\deg X \ge 2$, we obtain
$$
F_1(x)=0
\text{\ \ or\ \  }
a_1=a_2.
$$
Otherwise, the leading term of $-4(a_2-a_1)F_1(x)xX$ does not vanish, which contradicts $D^2y(x)^2 \in \Q[X]$.
Assume that $F_1(x)=0$.
Then, $F_2(x)=0$, and so $s(x)=a_1X+a_2 \in \Q[X]$.
This implies that $\sqrt{-D} \in \Q(\zeta_k)$, which contradicts our assumption.
Hence, we have $F_1(x) \neq 0$ and $a_1=a_2=b$;
we also have 
$
-Dy(x)=-2F_1(x)x-2a_1.
$ 
Thus,
$$
\deg \left( y(x)^2 \right) \le 2(m-1).
$$
By $y(x)^2 \in \Q[X]$ and $\deg X=m$, we have $\deg y(x)=m/2$.
In the same way, we can obtain the same result if $k=3$ or $6$, as follows.

Suppose that $k=3$.
Then $r(x)=\Phi_3(X)=X^2+X+1$, and so the right-hand side of (\ref{(X-1)s(x)_2}) becomes
\begin{eqnarray*}\label{(X-1)s(x)_2.3}
-Dy(x)+G(x)xX^2+bX^2+G(x)xX+bX+G(x)x+b.
\end{eqnarray*}
Comparing this with (\ref{(X-1)s(x)_1}), we obtain
$G(x)=F_1(x)=F_2(x)-F_1(x)$ and $b=a_1$,
since $\deg y(x)\le m$.
Thus,
$$
-Dy(x)=
(a_2-2a_1)X-3F_1(x)x-(a_2+a_1),
$$
and so 
\begin{eqnarray*}
&&
D^2y(x)^2
\\
&&
=
-6(a_2-2a_1)F_1(x)xX+9F_1(x)^2x^2+6(a_2+a_1)F_1(x)x+(\text{polynomial of $X$}).
\end{eqnarray*}
By $D^2y(x)^2 \in \Q(X)$ and $\deg X \ge 2$, we obtain 
$$
F_1(x)=0
\text{\ \ or\ \  }
2a_1=a_2.
$$
If we assume that $F_1(x)=0$, then this contradicts $\sqrt{-D} \notin \Q(\zeta_4)$ in the same way as in the case $k=4$.
Hence, we have $F_1(x) \neq 0$ and $a_2=2a_1$.
We have 
$
-Dy(x)=-3F_1(x)x-3a_1,
$ 
and so we have $\deg y(x)=m/2$ in the same way as in the case $k=4$.

Suppose that $k=6$.
Then, $r(x)=\Phi_6(X)=X^2-X+1$, and the right-hand side of (\ref{(X-1)s(x)_2}) becomes
\begin{eqnarray*}\label{(X-1)s(x)_2.3}
-Dy(x)+G(x)xX^2+bX^2-G(x)xX-bX+G(x)x+b.
\end{eqnarray*}
Comparing this with (\ref{(X-1)s(x)_1}), we obtain
$G(x)=F_1(x)=-F_2(x)+F_1(x)$, $b=a_1$,
since $\deg y(x)\le m$.
Thus,
$$
-Dy(x)=
a_2X-F_1(x)x-(a_2+a_1),
$$
and so 
\begin{eqnarray*}
D^2y(x)^2
=
-2a_2F_1(x)xX+F_1(x)^2x^2+2(a_2+a_1)F_1(x)x+(\text{polynomial of $X$}).
\end{eqnarray*}
By $D^2y(x)^2 \in \Q(X)$ and $\deg X \ge 2$, we obtain 
$$
F_1(x)=0
\text{\ \ or\ \  }
a_2=0.
$$
If we assume that $F_1(x)=0$, then this induces a contradiction in the same way as in the case $k=4$.
Hence $F_1(x) \neq 0$ and $a_2=0$.
We have 
$
-Dy(x)=-F_1(x)x-a_1,
$ 
and so we have $\deg y(x)=m/2$ in the same way as in the case $k=4$.
\end{proof}

Assume that $\rho(t,r,q)=1$.
Then, $\deg y(x)=m/2$ by Lemma \ref{deg y}.
For each $k$, $Dy(x)^2=4h\Phi_k(X)-(X-1)^2$ has degree $m$.
Hence, if $k=3$, $4$, or $6$, then $h=1/4$ and $Dy(x)^2=3X$, $2X$, or $X$, respectively.
Then, $q(x)=\frac{1}{4}(t(x)^2+Dy(x)^2)$ becomes
$$
\frac{1}{4}
(X^2+5X+1),
\ \ 
\frac{1}{4}
(X^2+4X+1),
\ \ 
\text{or\ \ } \frac{1}{4}
(X^2+3X+1)
$$
if $k=3$, $4$, or $6$, respectively.
None of these can represent integers, and so this would contradict the assumption.
Therefore, we have shown that $\rho(t,r,q)$ cannot be $1$.
\hfill $\square$

\begin{eg}\label{k=4,D=2 and k=6,D=1}
{\rm Let $k=4$, $t(x)=x^2+1$, and $r(x)=x^4+1$.
Then, corresponding $x$ to $\zeta_8$, we obtain an isomorphism $\Q[x]/(r(x)) \isom \Q(\zeta_8)$ .
If we choose $D=2$, then $\sqrt{-2} \notin \Q(\zeta_4)$.
Then,
\begin{eqnarray*}
\begin{cases}
s(x)=x+x^3
\\
y(x)=x\ \left(\equiv \frac{(t(x)-2)s(x)}{-2} \mod r(x) \right)
\\
q(x)=\frac{1}{4}(x^4+4x^2+1).
\end{cases}
\end{eqnarray*}

Also, let $k=6$, $t(x)=x^2+1$, $D=1$, and $r(x)=x^4-x^2+1$.
Then, $\sqrt{-1} \notin \Q(\zeta_6)$, and
corresponding $x$ to $\zeta_{12}$, we obtain 
\begin{eqnarray*}
\begin{cases}
s(x)=x^3
\\
y(x)=x\ \left(\equiv \frac{(t(x)-2)s(x)}{-1} \mod r(x) \right)
\\
q(x)=\frac{1}{4}(x^4+3x^2+1).
\end{cases}
\end{eqnarray*}
}
\end{eg}


\section{Proof of Theorem \ref{main thm 3}}
We now show Theorem \ref{main thm 3}.
Let $m=\deg t(x)$.
Note that $r(x)$ satisfies $r(x) \mid \Phi_k(t(x)-1)$, and it is known that $\varphi(k) \mid \deg r(x)$ (see {\cite[Lemma 5.1]{Free}} or {\cite[Theorem 5.1]{FST}}).
Therefore, there is some integer $n \le m$ such that
$$
\deg r(x)=4n.
$$
Furthermore, if $4n<2m$, then we obtain
$
\rho(t,r,q) >1
$
from 
$
\rho(t,r,q) \ge \frac{2\deg t(x)}{\deg r(x)}.
$
Therefore, we may assume that $m \le 2n$.
If we combine the assumption that $\deg r(x) \neq 2 \deg t(x)$ with these facts, then we obtain
\begin{eqnarray}\label{n<m<2m}
n \le m < 2n.
\end{eqnarray}
We may also assume that $\deg y(x) <\deg r(x)$.
Let $\zeta=\zeta_k$ be a primitive $k$th root of unity corresponding to $t(x)-1$ under a fixed isomorphism $\Q[x]/(r(x)) \isom K \supset \Q(\zeta_k)$.
For $\alp \in \Q(\zeta)$, we define $P(\alp)=P(\alp;x) \in \Q[x]$ by the polynomial corresponding to $\alp$ of degree less than $4n$.
For example, 
$P(\zeta)=t(x)-1$, 
$P((\zeta-1)\sqrt{-D})=-Dy(x)$.
Furthermore, we see that 
$P(\alp+\beta)=P(\alp)+P(\beta)$ 
holds for any $\alp,\beta \in \Q(\zeta)$.
Thus, we will assume that $\rho(t,r,q)=1$.
Then, since $2\deg t(x)<\deg r(x)$ by (\ref{n<m<2m}), we obtain
\begin{eqnarray}\label{deg y(x)=2n}
\deg y(x)=2n.
\end{eqnarray}
Moreover, since $\deg \left( P(\zeta)^2 \right)=2m<4n$, again by (\ref{n<m<2m}), we have
$
P(\zeta^2)=P(\zeta)^2.
$
To prove the theorem, we consider $(\zeta-1)\sqrt{-D}$ and $\zeta (\zeta-1)\sqrt{-D}$.

Suppose that $k=8$ and $D=1$.
Then, $\sqrt{-1}=\pm \zeta^2$, and
\begin{eqnarray*}
-Dy(x)
&=&
P\left( \pm(\zeta-1)\zeta^2\right)
=
P\left( \pm(\zeta^3-\zeta^2)\right)
\\
&=&
\pm P(\zeta^3) \mp P(\zeta)^2.
\end{eqnarray*}
From this, (\ref{n<m<2m}), and (\ref{deg y(x)=2n}), we obtain
\begin{eqnarray}\label{P(zeta^3)}
\deg P(\zeta^3) \le 2m.
\end{eqnarray}
On the other hand, by (\ref{n<m<2m}),
we know that $\deg \left( P(\zeta)P((\zeta-1)\sqrt{-1}) \right)=m+2n<4n$.
Since $\zeta^4=-1$, we have 
$$
P\left( \mp(\zeta^3+1) \right)
=
P \left( \zeta (\zeta-1)\sqrt{-1} \right)
=
P(\zeta)P\left((\zeta-1)\sqrt{-1}\right).
$$
In particular, $P(\zeta^3+1)$ has degree $m+2n$.
However, by (\ref{P(zeta^3)}), $\deg P(\zeta^3+1) \le 2m$,
and so $2n \le m$.
This contradicts (\ref{n<m<2m}).
Thus, $\rho(t,r,q) \neq 1$ in this case.

The other cases can be proven in the same way.
Suppose that $k=8$ and $D=2$.
Then, $\sqrt{-2}=\pm (\zeta+\zeta^3)$, and so
\begin{eqnarray*}
-Dy(x)
&=&
P\left( \pm(\zeta-1)(\zeta+\zeta^3)\right)
=
P\left( \pm(-\zeta^3+\zeta^2-\zeta-1)\right)
\\
&=&
\mp P(\zeta^3) \pm P(\zeta^2-\zeta-1).
\end{eqnarray*}
From this, (\ref{n<m<2m}), and (\ref{deg y(x)=2n}), we obtain
$
\deg P(\zeta^3) \le 2m.
$
On the other hand, from the above approach, we know that 
$$
P\left( \zeta (\zeta-1)\sqrt{-2}\right)
=
\pm P(\zeta^3-\zeta^2-\zeta+1)
$$ 
has degree $m+2n$.
However, $\deg P(\zeta^3-\zeta^2-\zeta+1) \le 2m$, and so $2n \le m$, which contradicts (\ref{n<m<2m}).
Thus, $\rho(t,r,q) \neq 1$.

Suppose that $k=12$ and $D=1$.
Then, $\sqrt{-1}=\pm \zeta^3$, and so
\begin{eqnarray*}
-Dy(x)
&=&
P\left( \pm(\zeta-1)\zeta^3\right)
=
P\left( \pm(-\zeta^3+\zeta^2-1)\right)
\\
&=&
\mp P(\zeta^3) \pm P(\zeta^2-1).
\end{eqnarray*}
From this, (\ref{n<m<2m}), and (\ref{deg y(x)=2n}), we obtain
$
\deg P(\zeta^3) \le 2m.
$
On the other hand, since $\zeta^4-\zeta^2+1=0$, from the above approach, we know that
$$
P\left( \zeta (\zeta-1)\sqrt{-1}\right)
=
P\left( \pm(\zeta^3-\zeta^2-\zeta+1)\right)
$$ 
has degree $m+2n$.
However, $\deg P(\zeta^3-\zeta^2-\zeta+1) \le 2m$, and so $2n \le m$.
This contradicts (\ref{n<m<2m}).
Thus, $\rho(t,r,q) \neq 1$.

Finally, suppose that $k=12$ and $D=3$.
Then, $\sqrt{-3}=\pm (2\zeta^2-1)$, and so 
\begin{eqnarray*}
-Dy(x)
&=&
P\left( \pm(\zeta-1)(2\zeta^2-1)\right)
=
P(\pm(2\zeta^3-2\zeta^2-\zeta+1))
\\
&=&
\pm 2P(\zeta^3) \mp P(2\zeta^2+\zeta-1).
\end{eqnarray*}
From this, (\ref{n<m<2m}), and (\ref{deg y(x)=2n}), we obtain
$
\deg P(\zeta^3) \le 2m.
$
On the other hand, from the above approach, we know that 
$$
P(\zeta (\zeta-1)\sqrt{-3})=P(\pm(-2\zeta^3+\zeta^2+\zeta-2))
$$ 
has degree $m+2n$.
However, $\deg P(-2\zeta^3+\zeta^2+\zeta-2) \le 2m$, and so $2n \le m$.
This contradicts (\ref{n<m<2m}).
Thus, we have shown that $\rho(t,r,q) \neq 1$ for all cases.
\hfill $\square$


\section{Conclusion}
We showed that 
there is no ideal case when $k=3$, $4$, or $6$.
We also showed that, if $\sqrt{-D} \in \Q(\zeta_k)$, there is no ideal case when $k=8$ or $12$ except when $\deg r(x) = 2 \deg t(x)$.
Note that the Barreto and Naehrig family satisfies $k=12$ and $\deg r(x) = 2 \deg t(x)$.
The case where $\sqrt{-D} \not\in \Q(\zeta_k)$ has not been sufficiently examined.

\vspace{10pt}

{\bf Acknowledgment}
\\
This work was supported by JSPS KAKENHI Grant Number 26870486, Grant-in-Aid for Young Scientists (B).



{\footnotesize

}

\vspace{4mm}
\noindent 
Tsuru University, 
3-8-1 Tahara Tsuru Yamanashi, 402-8555 Japan
%
%
%
\\
{\it E-mail address}: {\tt okano@tsuru.ac.jp} 

\end{document}